\newif\ifprocs
\newtheorem{theorem}{Theorem}[section]
\newtheorem{lemma}[theorem]{Lemma}
\newtheorem{assumption}[theorem]{Assumption}
\theoremstyle{plain}
\newtheorem*{rep@theorem}{\rep@title}
\newcommand{\newreptheorem}[2]{%
\newenvironment{rep#1}[1]{%
 \def\rep@title{#2 \ref{##1}}%
 \begin{rep@theorem}}%
 {\end{rep@theorem}}}
\newcommand{\ProblemName}[1]{\textsf{#1}}
\newcommand{\SCO}{\ProblemName{Set Cover}\xspace}
\newcommand{\PPC}{\ProblemName{p-Partial Cover}\xspace}
\newcommand{\SI}{\ProblemName{Subgraph Isomorphism}\xspace}
\newcommand{\KTR}{\ProblemName{kTree}\xspace}
\providecommand{\card}[1]{\lvert#1\rvert}
\title{Conditional Lower Bound for Subgraph Isomorphism with a Tree Pattern%
\thanks{This work was partially supported by 
a Minerva Foundation grant.}
\thanks{A merged work containing the results in this paper is available on \href{url}{http://arxiv.org/abs/1711.08041}.}
}
\author{Robert Krauthgamer\thanks{Email: \texttt{robert.krauthgamer@weizmann.ac.il}  }
\qquad 
Ohad Trabelsi\thanks{Email: \texttt{ohad.trabelsi@weizmann.ac.il}}
\\
Weizmann Institute of Science
}
\begin{document}
\maketitle

\begin{abstract}
The \KTR problem is a special case of \SI where the pattern graph is a tree,
that is, the input is an $n$-node graph $G$ and a $k$-node tree $T$, 
and the goal is to determine whether $G$ has a subgraph isomorphic to $T$.
We provide evidence that this problem cannot be computed significantly faster than $2^{k}\poly(n)$, 
which matches the fastest algorithm known for this problem by Koutis and Williams
[ICALP 2009 and TALG 2016]. Specifically, we show that if \KTR can be solved in time $(2-\varepsilon)^k\poly(n)$ for some constant $\varepsilon>0$, then \SCO with $n'$ elements and $m'$ sets can be solved in time $(2-\delta)^{n'}\poly(m')$ for a constant $\delta(\varepsilon)>0$, which would refute the Set Cover Conjecture by Cygan et al. [CCC 2012 and TALG 2016]. 

Our techniques yield a new algorithm for the \PPC problem, 
a parameterized version of \SCO that requires covering at least $p$ elements (rather than all elements).
Its running time is $(2+\varepsilon)^p (m')^{O(1/\varepsilon)}$
for any fixed $\varepsilon>0$, 
which improves the previous $2.597^p\poly(m')$-time algorithm by Zehavi [ESA 2015]. Our running time is nearly optimal, as a $(2-\varepsilon')^p\poly(m')$-time algorithm would refute the Set Cover Conjecture.
\end{abstract}

\section{Introduction}

The \SI problem was studied extensively in theoretical computer science. The most basic version of it asks whether a host graph $G$ contains a copy of a pattern graph $H$ as a subgraph. It is well known to be NP-hard since it generalizes hard problems such as \ProblemName{Maximum Clique} and \ProblemName{Hamiltonicity}~\cite{karp1972},
but unlike many natural NP-hard problems, it requires $N^{\Omega(N)}$ time where $N=\card{V(G)}+\card{V(H)}$ is the total number of vertices, unless the exponential time hypothesis (ETH) fails~\cite{CFGKMP16}. 
Hence, most past research addressed its special cases that are in $P$, including the case where the pattern graph is of constant size~\cite{marx14}, or when both graphs are trees~\cite{AbboudWY15}, biconnected outerplanar graphs~\cite{Lingas89}, two-connected series-parallel graphs~\cite{lovasz2009}, and more~\cite{Dessmark00, Matousek92}. 

We will focus on the version where the pattern is a tree $T$ on $k$ nodes, and the goal is to decide whether $G$ contains a copy of $T$ as a subgraph. For this special case, called \KTR, a couple of different techniques were used in order to design algorithms. The color-coding method, designed by Alon, Yuster, and Zwick~\cite{Alon95}, yields an algorithm with running time $O^*((2e)^{k})$,
where throughout, $O^*(\cdot)$ hides polynomial factors in the instance size.
Later, a new method that was developed utilizes \ProblemName{kMLD} 
(stands for $k$ Multilinear Monomial Detection -- the problem of detecting multilinear monomials of degree $k$ in polynomials presented as circuits) 
to create a \KTR algorithm with running time $O^*(2^k)$~\cite{Koutis16}. 
Our main result shows that this running time is actually optimal 
(up to exponential improvements) 
based on the Set Cover Conjecture (SeCoCo), introduced by~\cite{cygan16}. 
In the \SCO problem, the input is a ground set $[n]=\{1,...,n\}$ 
and a collection of $m$ sets, and the goal is to find the smallest sub-collection of sets whose union is the entire ground set. 
SeCoCo implies that it cannot be solved significantly faster than $O^*(2^{n})$. 
We can now state our main result,
whose proof appears in Section~\ref{ProofsKTR}. 

\begin{theorem}\label{Thm:LB}
If for some fixed $\varepsilon>0$, \KTR can be solved in time $O^*((2-\varepsilon)^k)$, then for some $\delta(\varepsilon)>0$, \SCO on $n$ elements and $m$ sets can be solved in time $O^*((2-\delta)^n)$.
\end{theorem}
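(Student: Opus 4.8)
The plan is to give a polynomial-time reduction that turns an instance of \SCO on a universe $[n]$ with $m$ sets into polynomially many instances of \KTR, each with a pattern tree on $k=(1+O(\varepsilon))n$ nodes and a host graph of size $\poly(m)$, so that the \KTR answers determine the minimum cover size. Running the hypothesized $O^*((2-\varepsilon)^k)$ algorithm on each instance then costs $O^*((2-\varepsilon)^{(1+O(\varepsilon))n})$, and the key arithmetic is to check that, for a suitable constant in the exponent, $(2-\varepsilon)^{(1+O(\varepsilon))n}=(2-\delta)^n$ for some $\delta=\delta(\varepsilon)>0$; since there are only $\poly(m)$ instances, this refutes SeCoCo.

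The gadget I would use is a \emph{caterpillar} pattern: a spine whose nodes stand for the chosen sets, together with $n$ leaves standing for the elements, where the host graph admits an edge between a spine node and a leaf only when the corresponding element belongs to the corresponding set. Placing a single host vertex per element makes any subgraph isomorphism injective on elements, so that using all $n$ leaves forces every element to be assigned to some chosen set containing it, i.e.\ a genuine cover; conversely a cover yields such an embedding.

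The central difficulty is the length of the spine: a cover may use $\Theta(n)$ sets, which would push $k$ up towards $2n$ and destroy the bound $(2-\varepsilon)^k<2^n$. To control this I would first replace the original sets by \emph{super-sets}, that is, unions of $b=\Theta(1/\varepsilon)$ original sets. A cover by $t$ original sets then becomes a cover by only $\lceil t/b\rceil=O(\varepsilon n)$ super-sets, so the spine---and hence the overhead in $k$---drops to $O(\varepsilon n)$, giving $k=(1+O(\varepsilon))n$ as required; there are only $m^{O(1/\varepsilon)}=\poly(m)$ super-sets, so the host graph stays polynomial. Guessing the number of super-sets and the target cover size $t$ (only $\poly(n)$ choices) then lets the reduction read off the exact minimum cover.

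I expect the main obstacle to be reconciling the \emph{fixed} shape of the pattern tree with the \emph{variable} profile of a cover: different covers split the $n$ elements among the chosen sets in different proportions, whereas a caterpillar fixes how many leaves hang off each spine node. Overcoming this requires a host-graph routing gadget that lets each chosen super-set absorb an arbitrary number of element-leaves, while still guaranteeing that each element is covered exactly once (so that the $n$ leaves realize a partition of $[n]$) and that the number of super-sets faithfully reflects the original cover size through the grouping into unions of $b$ sets. Getting this encoding exact---simultaneously flexible in the per-set load, rigid in the total coverage, and honest about the cover size---without spending more than $O(\varepsilon n)$ extra pattern nodes is the crux on which the entire parameter budget $k=(1+O(\varepsilon))n$ depends.
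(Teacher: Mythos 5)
Your setup matches the paper's reduction almost exactly up to the point you call ``the crux'': the super-set idea (taking all unions of $b=\Theta(1/\varepsilon)$ original sets as host vertices, giving $m^{O(1/\varepsilon)}$ of them) is precisely the paper's $M_g$, the star/spine structure with element leaves is the paper's pattern, and the parameter budget $k=(1+\varepsilon)n+O(1)$ is the same. But you stop exactly where the proof actually begins: you name the tension between the \emph{fixed} shape of the pattern tree and the \emph{variable} profile of a cover, and then postulate an unspecified ``host-graph routing gadget'' that would let each chosen super-set absorb an arbitrary number of element-leaves within polynomially many instances. No such gadget is exhibited, and it is unclear one can exist: in unlabeled \KTR{} the pattern's degree sequence is rigid, so a single tree cannot simultaneously represent covers whose sets absorb $n$ elements in different proportions, and padding with dummy elements adjacent to super-sets breaks soundness, since interchangeable leaves could then map to dummies while genuine elements go uncovered. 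The paper's resolution is different and essential: it does \emph{not} use polynomially many instances, but enumerates all unordered partitions $\alpha$ of $n$ --- one tree $T_g^{\alpha}$ per partition, with star sizes prescribed by $\alpha$ (shrunk by grouping $g$ summands per super-set) --- and there are only $2^{O(\sqrt{n})}$ partitions by the Hardy--Ramanujan asymptotics, a subexponential factor that is harmless in the final runtime $O^*\bigl(2^{(1-\varepsilon'/2)(1+\varepsilon)n+O(\sqrt{n})}\bigr)$. Your insistence on $\poly(m)$ instances, with only the cover size and number of super-sets guessed, cannot determine the leaf distribution and so leaves the reduction incomplete.

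A second, smaller gap: you assert that placing one host vertex per element ``makes any subgraph isomorphism injective on elements,'' but injectivity is automatic; the real issue is forcing the embedding to map the pattern's leaves \emph{onto} all of $N$ and the star centers \emph{into} $M\cup M_g$, rather than some degenerate placement elsewhere in the host. The paper spends real effort here: it first assumes WLOG that every set has size at most $n/g^2$ and that the optimum has at least $g$ sets (justified by faster algorithms in the complementary cases), and then adds rigidity gadgets ($r, r_1, r_2, r_g$ and the padding family $R$) so that a degree-counting argument pins down $f(r')=r$, $f(r'_g)=r_g$, and hence $f(N')=N$. Without these assumptions and gadgets, your soundness direction (embedding $\Rightarrow$ cover) does not go through, since nothing prevents, say, a spine node from mapping to a high-degree element vertex or to the root region of the host.
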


In spite of extensive effort, the fastest algorithm for \SCO is still 
essentially the folklore dynamic programming algorithm that runs in time $O^*(2^n)$, with several improvements in special cases~\cite{Koivisto09, Bjorklund09, neder16, bjor17}. SeCoCo states that for every fixed $\varepsilon>0$ there is an integer $t(\varepsilon)>0$ such that \SCO with sets of size at most $t$ cannot be computed in time $O^*(2^{(1-\varepsilon)n})$. It clearly implies that for every fixed $\varepsilon>0$, \SCO cannot be solved in time $O^*(2^{(1-\varepsilon)n})$.

Several conditional lower bounds were based on this conjecture in the recent decade, including for \ProblemName{Set Partitioning}, \ProblemName{Connected Vertex Cover}, \ProblemName{Steiner Tree}, \ProblemName{Subset Sum}~\cite{cygan16} (though for the last problem, it was later proved assuming instead Strong ETH (SETH)~\cite{abboud2017seth}), \ProblemName{Maximum Graph Motif}~\cite{bjor16}, parity of the number of solutions to \SCO with at most $t$ sets~\cite{bjor15}, \ProblemName{Colorful Path} (interestingly, this problem is a sub-routine in the aforementioned color-coding method, however, this lower bound says nothing about the \KTR problem) and \ProblemName{Colorful Cycle}~\cite{kow16}, and the dynamic, general and connected versions of \ProblemName{Dominating Set}~\cite{kri17}.  

Note that our conditional lower bound is for the undirected version of \KTR. The directed version of \KTR is defined similar to the undirected version, except that $G$ is a directed graph, and $T$ is a directed tree, that is the undirected version of $T$ is a tree. This directed version of \KTR can only be harder - even when the directed tree is an arborescence, as one can reduce the undirected version to it with only a polynomial loss, as follows. Define the host graph $G'$ to be $G$ with edges in both directions, and direct the edges in $T$ away from an arbitrary vertex $v\in T$ to create the directed tree $T'$, which is thus an arborescence. Clearly, the directed instance is a yes-instance if and only if the undirected instance also is.

\medskip
Our techniques yield a new algorithm for the \PPC problem, 
whose input is similar to the \SCO problem but with an additional integer $p$, 
and the goal is to find the smallest sub-collection of sets whose union contains at least $p$ elements. 
The previously fastest algorithm for this problem
runs in time $O^*(2.597^p)$~\cite{Zehavi2015}. 
\begin{theorem}\label{Thm:Alg}
For every fixed $\varepsilon>0$, \PPC can be solved in time $(2+\varepsilon)^p m^{O(1/\varepsilon)}$.
\end{theorem}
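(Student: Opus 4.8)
The plan is to parameterize the subset-indexed dynamic program that solves \SCO in $O^*(2^n)$ so that the exponent is driven by the coverage target $p$ rather than by the ground-set size $n$, following the representative-families route that already gives Zehavi's $O^*(2.597^p)$ bound~\cite{Zehavi2015}, and to squeeze its base down toward the information-theoretic optimum $2^p$. Concretely, I would process the input sets one at a time and maintain, for every coverage level $i\in\{0,1,\dots,p\}$, a family $\mathcal{F}_i$ of already-covered element-sets of cardinality $i$, each tagged with the least number of sets that realizes it; adding an input set $S$ moves a partial solution covering $A$ to one covering $A\cup S$, and the output is the smallest number of sets whose coverage reaches $p$. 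Viewing distinctness of the $\le p$ covered elements through a uniform matroid of rank $p$, it is enough to keep each $\mathcal{F}_i$ only \emph{up to representation}: a subfamily that, for every extension reaching total coverage $p$, retains a disjoint-compatible representative. The textbook size of such families is $\binom{p}{i}$, which sums over $i$ to $2^p$; the whole difficulty lies in the size and the construction cost of the families one actually maintains.

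To obtain base $2+\varepsilon$ with an $m^{O(1/\varepsilon)}$ overhead I would combine two ingredients. First, guess the high-contribution sets of an optimal solution: in any inclusion-minimal optimum that covers exactly $p$ elements, at most $\lceil 1/\varepsilon\rceil$ of the chosen sets each contribute more than $\varepsilon p$ \emph{new} elements, because these new elements are distinct and total at most $p$. Enumerating all $O(m^{1/\varepsilon})=m^{O(1/\varepsilon)}$ sub-collections of this many sets, placing one of them into the solution and deleting the elements it covers, reduces the task to a residual \PPC instance in which every set contributes at most $\varepsilon p$ newly covered elements, so the dynamic program now advances the coverage level in steps of size at most $\varepsilon p$. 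Second, on this small-step residual I would maintain the representative families through a direct (derandomized) construction with a representation slack of order $\varepsilon p$, trading the fast-matrix-multiplication step responsible for Zehavi's $2.597$ for a per-step cost that is merely polynomial; the slack inflates the maintained family sizes to $2^{(1+O(\varepsilon))p}$, i.e.\ base $2^{1+O(\varepsilon)}=2+O(\varepsilon)$, which after rescaling $\varepsilon$ yields the claimed $(2+\varepsilon)^p m^{O(1/\varepsilon)}$ running time; correctness follows by returning the minimum number of sets over all guesses.

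The main obstacle is the second ingredient: proving that on the small-step residual one can keep families of size $2^{(1+O(\varepsilon))p}$ that are genuinely representative with respect to all future extensions, while spending only $\poly(m)$ per step and avoiding the $2e$-type loss that a naive color-coding or random-restriction implementation would incur. This requires pinning down the right notion of $q$-representation for the uniform matroid here, a slack-versus-cost trade-off that sets $q=\Theta(\varepsilon p)$, and a construction (e.g.\ via splitters or perfect hash families) whose output size stays within the target bound. Two smaller points also need care: verifying that some optimal solution really decomposes into at most $1/\varepsilon$ heavy sets plus small-contribution sets, so that the enumeration is exhaustive, and handling the ``at least $p$'' threshold by truncating all bookkeeping at coverage $p$ and reporting success as soon as level $p$ is reached.
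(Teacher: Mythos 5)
Your proposal has a genuine gap, and it sits exactly where you flag it yourself: the second ingredient. The step ``maintain representative families with slack $\Theta(\varepsilon p)$ at per-step cost $\poly(m)$ and total size $2^{(1+O(\varepsilon))p}$'' is not an available tool --- it is the entire difficulty, and it is currently open. In the uniform-matroid setting, the known constructions (Fomin--Lokshtanov--Panolan--Saurabh, and the size/time tradeoff versions that Zehavi mixes in~\cite{Zehavi2015}) compute a $q$-representative subfamily of an $i$-uniform family $\mathcal{F}$ at a cost of order $\card{\mathcal{F}}\cdot\binom{i+q}{q}^{\omega-1}$ (or the tradeoff analogue governed by a balance parameter), and it is precisely the product of maintained family size and per-level pruning cost, maximized over the levels $i\in\{0,\dots,p\}$, that produces bases like $2.619$ and $2.597$ rather than $2$. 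The excess over base $2$ is not an artifact of a single fast-matrix-multiplication call that a ``representation slack'' can simply strip out, and shrinking the step size to $\varepsilon p$ does not obviously help: the pruning cost depends on $i$ and $q=p-i$, not on how much coverage a single transition adds. If the tradeoff you postulate existed, it would by itself improve the representative-families state of the art; you give no construction, no size-bound proof, and no argument for why splitters or perfect hash families would stay within $2^{(1+O(\varepsilon))p}$ at polynomial per-step cost. Your first ingredient also needs repair: heaviness of a set is measured against an unknown $p$-element subset of the optimum's coverage, so after the guess a ``light'' set may still cover far more than $\varepsilon p$ residual elements relative to a DP state. To keep steps small you would have to let a transition count only a chosen sub-subset of the newly covered elements, which naively multiplies the branching by $\binom{n}{\varepsilon p}$, and you do not explain how to avoid this.

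For contrast, the paper's proof avoids dynamic programming over covered element-sets entirely: it reduces \PPC to (directed, rooted) \KTR by guessing an unordered partition of $p$ describing the coverage profile of an optimal solution --- only $2^{O(\sqrt{p})}$ guesses, by the Hardy--Ramanujan bound --- and encoding each guess as a star-of-stars tree of size about $(1+\varepsilon)p$, where the compression below $2p$ comes from bundling $g=O(1/\varepsilon)$ sets into single ``power'' vertices (the set $M_g$ of $g$-tuples of input sets, whence the $m^{O(1/\varepsilon)}$ factor). Each instance is then handed to the $O^*(2^k)$ multilinear-monomial-detection algorithm of Koutis and Williams~\cite{Koutis16} as a black box; the base $2$ there arises from algebraic cancellation over a group algebra, not from maintaining set families, which is how the final base drops to $2+\varepsilon$ --- below anything currently achievable along the representative-families route you chose. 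Your outline would become a proof only if you supplied the missing representative-family tradeoff, and that is an open research problem, not a routine verification.
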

Our proof is based on simple modifications to the reduction of Theorem~\ref{Thm:LB} and appears in Section~\ref{ProofsPPC}. 
Observe that an $O^*((2-\varepsilon')^p)$-time algorithm for \PPC would violate SeCoCo (because \SCO is a special case of \PPC with $p=n$), hence our algorithm's running time is almost optimal. 


\subparagraph*{Prior Work.} 
As mentioned earlier, for the general version of \SI, 
no algorithm can decide whether a host $N$-vertex graph $G$ contains a subgraph isomorphic to a pattern $N$-vertex graph $H$ in time $N^{o(N)}$, 
unless the ETH fails~\cite{CFGKMP16}. 
For the version where both the host and the pattern graphs are rooted trees of size $N$, a tight lower bound of $N^{2-o(1)}$ was proved~\cite{Abboud16} based on the Orthogonal Vectors Conjecture (and consequently on SETH). 
The \KTR problem cannot be solved in time $2^{o(k)}$ assuming the ETH,
because directed Hamiltonicity is a special case of this problem and 
there is a simple reduction (with polynomial blowup) from \ProblemName{3SAT}. 
If we care about the exact exponent in the running time, 
only a restricted lower bound is known.
Williams and Koutis~\cite{Koutis16} used communication complexity to show that a faster algorithm for their intermediate problem \ProblemName{kMLD} in some settings is not possible, and so among a specific class of algorithms, 
their $O^*(2^k)$ algorithm for \KTR is optimal. 

The exact running time of \PPC was first studied by Bl{\"a}ser~\cite{blaser03},
who provided a randomized $O^*(5.437^p)$-time algorithm.
This was followed by a deterministic $O^*(4^p p^{2p})$-time algorithm~\cite{Bonnet2013},
and both were improved to a deterministic $O*(2.619^p)$-time algorithm~\cite{Sha16}.
Finally, the aforementioned deterministic $O^*(2.597^k)$-time algorithm
was devised by Zehavi~\cite{Zehavi2015}.


\section{Reduction to \KTR}\label{ProofsKTR}

In this section we prove Theorem~\ref{Thm:LB}. In order to make the proof simpler, we will have a couple of assumptions regarding the \SCO instance. First, for a constant $g>0$ to be determined later, 
we can assume that all the sets in the \SCO instance are of size at most $n/g^2$, 
and that the optimal solution has at least $g$, as these cases can already be solved significantly faster than $O^*(2^n)$, proving the theorem for them in a degenerate manner. 
We formalize it as follows.

\begin{assumption}\label{asm1}
\textit{All the sets in the \SCO instance are of size at most $n/g^2$.}
\end{assumption}
To justify this assumption, notice that if some optimal solution for the \SCO instance contains a set of size at least $n/g^2$, we can find such optimal solution by simply guessing one set of at least this size (using exhaustive search over at most $m$ choices) and then applying the known dynamic programming algorithm on the still uncovered elements (at most $n-n/g^2$ of them), and return the optimal solution in total time $O^*(2^{(1-1/g^2)n})$.

\begin{assumption}\label{asm2}
\textit{No solution has size less than $g$.}
\end{assumption}
The reason that this assumption can be made is that if some optimal solution for the \SCO instance contains at most $g-1$ sets, then it is easy to find it in polynomial time, as the number of possibilities is $O(m^g)$, and $g$ is a constant. We continue to the following lemma, which is the heart of the proof.

\begin{lemma}\label{LemmaKTR}
For every fixed $\varepsilon>0$, \SCO on a ground set $N=[n]$ and a collection $M$ of $m$ sets that satisfies assumptions~\ref{asm1} and~\ref{asm2}, 
can be reduced to $2^{O(\sqrt{n})}$ instances of \KTR with $k=(1+\varepsilon)n+O(1)$.
\end{lemma}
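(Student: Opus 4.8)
The plan is to build one \KTR instance per \emph{block-size profile}, where a profile records the multiset of sizes of the parts produced when a cover is turned into a partition of $N$ (assign each element to one set that covers it). Since every set, hence every part, has size at most $n/g^2$ by Assumption~\ref{asm1}, such a profile is an integer partition of $n$ into parts of bounded size, and the classical estimate for the partition function shows there are only $2^{O(\sqrt n)}$ of them; this is exactly the claimed number of instances, and the cover size $\ell$ is read off as the number of parts. For a fixed profile $\lambda=(a_1,\dots,a_\ell)$ I would construct a host graph $G$ and a pattern tree $T_\lambda$ so that $G$ contains a copy of $T_\lambda$ if and only if the \SCO instance admits a cover whose induced partition realizes the sizes $\lambda$. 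Ranging over all profiles then detects a cover of \emph{any} admissible size.

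Concretely, $G$ would have one vertex $v_e$ for every element $e\in N$ and one vertex $u_S$ for every set $S\in M$, with an edge $u_Sv_e$ whenever $e\in S$, together with enough connectivity among the set-vertices (say a clique, or a common hub) to let a pattern choose its sets freely. The tree $T_\lambda$ consists of the $n$ element-leaves grouped into $\ell$ blocks of sizes $a_1,\dots,a_\ell$, each block hanging off one \emph{selector} node, with the selectors joined into a single connected pattern. Any embedding must send selectors to set-vertices and leaves to element-vertices; because subgraph isomorphism is injective, the $\ell$ selectors land on \emph{distinct} sets and the $n$ leaves land on $n$ \emph{distinct} elements, that is on all of $N$, with each block mapped inside the set picked by its selector. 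Thus an embedding is precisely a choice of $\ell$ sets plus an assignment of every element into one of them, i.e.\ a cover realizing $\lambda$, and conversely such a cover yields an embedding; injectivity is what upgrades ``each leaf lies in its block's set'' to ``all $n$ elements are covered.''

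What remains, and what I expect to be the main obstacle, is controlling $k$. The pattern has $n$ leaves plus the selector/connector nodes, so $k=n+O(\ell)$, and the entire argument hinges on forcing this overhead below $\varepsilon n$; this is genuinely delicate because the number of blocks $\ell$ equals the cover size, which for the instances underlying the conjecture is linear in $n$. The precise difficulty is to certify, using only $o(n)$ (or at most $\varepsilon n$) extra pattern nodes, that each block of elements sits inside a \emph{single} chosen set, whereas the naive star-of-selectors gadget spends one node per block and hence $\Theta(\ell)$ overhead. The route I would try first is to amortize: since an integer partition of $n$ has only $O(\sqrt n)$ \emph{distinct} part sizes, I would group all blocks of a common size into one shared compact gadget, aiming for total overhead $O(\sqrt n)=o(n)$; the crux is whether such a shared gadget can still pin each block to one set. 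If that cannot be made to certify single-set membership, the fallback is to run the reduction only for covers of size $\ell\le\varepsilon n$ (so $k=n+\ell\le(1+\varepsilon)n+O(1)$) and to dispatch larger covers by a separate faster routine, invoking the degenerate cases preceding the lemma together with Assumptions~\ref{asm1} and~\ref{asm2} and the freedom to fix $g=g(\varepsilon)$ large. In either case I would finish by the bookkeeping: there are $2^{O(\sqrt n)}$ patterns, each with $k=(1+\varepsilon)n+O(1)$, and a yes-answer on any one corresponds to a cover of the guessed size, which is the lemma.
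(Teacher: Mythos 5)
Your outer scaffolding coincides with the paper's: enumerate the $2^{O(\sqrt n)}$ integer partitions of $n$, build one pattern tree of stars per partition over the bipartite incidence host, let injectivity of the embedding force all $n$ leaves onto all of $N$, and read the cover size off as the number of parts. But the lemma stands or falls on exactly the step you flag as the main obstacle, and neither of your two proposed routes closes it. The fallback (run the reduction only for covers of size $\ell\le\varepsilon n$ and ``dispatch larger covers by a separate faster routine'') cannot work: Assumption~\ref{asm1} caps every set at $n/g^2$ elements, so \emph{every} cover has size at least $g^2$, and the instances on which the Set Cover Conjecture is hard have sets of bounded size $t(\varepsilon)$, hence optimum $\Omega(n)$ --- the regime $\ell>\varepsilon n$ is the entire hard case, and the degenerate cases preceding the lemma (a large set in some optimum; optimum below the constant $g$) give you nothing there. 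The amortization idea (one shared gadget per \emph{distinct} part size, overhead $O(\sqrt n)$) has no mechanism for the very property you concede is the crux, namely pinning each individual block to a single chosen set; a gadget shared by all blocks of size $a$ must still certify single-set membership block by block, which is where the per-block cost reappears. The paper's missing ingredient is the ``powering'' device of Cygan et al.: augment the \emph{host} with a vertex set $M_g=\{X\subseteq M:\card{X}=g\}$ (only $m^g=\poly(m)$ vertices for constant $g$), join $X\in M_g$ to every element contained in some member of $X$, and in the tree merge each group of $g$ consecutive parts of $\alpha$ into one super-star whose center must map into $M_g$. One tree node then accounts for $g$ sets, the selector count drops to $\lfloor \ell/g\rfloor+g\le n/g+g$, and the tree has size at most $n(1+9/g)+O(1)=(1+\varepsilon)n+O(1)$ for $g=9/\varepsilon$; a center mapped to $X$ certifies that its leaves are covered by the union of the $g$ sets of $X$, so the extracted cover still has size exactly $\card{\alpha}$.

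There is a second, independent gap: your claim that ``any embedding must send selectors to set-vertices and leaves to element-vertices'' is false for both host sketches you offer. With a clique on the set-vertices, an entire star (center and leaves alike) can embed inside the clique; with a single common hub, selectors can land on high-degree element-vertices or on the hub itself, and leaves on set-vertices, destroying the extraction of a cover. Forcing the embedding is where the paper spends its gadgetry and both assumptions: the vertices $r,r_1,r_2,r_g$ together with the padding stars $R$ create degree and distance constraints, Assumption~\ref{asm1} bounds the degree of every vertex in $M\cup M_g$ by $g\cdot n/g^2=n/g$, strictly below the degree $n/(g/2)$ of the hub candidates, and Assumption~\ref{asm2} guarantees $M^{\alpha}_g\neq\emptyset$ so that $r'_g$ has vertices at distance $2$ away from $r'$, which rules out the remaining misplacements. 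Without gadgets of this kind your ``conversely, an embedding is precisely a cover'' direction does not hold, so even the $\ell\le\varepsilon n$ restriction of your construction is not yet a correct reduction.
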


\begin{proof}[Proof of Theorem~\ref{Thm:LB}]
Assuming that for some $\varepsilon\in (0,1)$, \KTR can be solved in time $O^*((2-\varepsilon')^k)\leq O^*(2^{(1-\varepsilon'/2)k})$. We reduce the \SCO instance by applying Lemma~\ref{LemmaKTR} 
with $\varepsilon=\varepsilon'/4$, 
and then solve each of the $2^{c_1\sqrt{n}}$ instances of \KTR 
in the assumed time of $O^*(2^{(1-\varepsilon'/2)((1+\varepsilon)n+c_2)})$, 
where $c_1,c_2>0$ are the constants implicit in the terms 
$2^{O(\sqrt{n})}$ and $O(1)$ in the lemma, respectively.
The total running time is 
$ O^*(2^{(1-\varepsilon'/2)(1+\varepsilon)n+c_1\sqrt{n}})
  = O^*(2^{(1-\varepsilon'/4-\varepsilon'^2/8)n+c_1\sqrt{n}})
  \leq O^*(2^{(1-\varepsilon'/4)n})\leq O^*((2-\varepsilon'/4)^n)$, 
which concludes the proof for $\delta(\varepsilon')=\varepsilon'/4$.

\end{proof}

To outline the proof of Lemma~\ref{LemmaKTR}, 
we will need the following definition. 
For an integer $a>0$, let $p(a)$ be the set of all unordered partitions of $a$, where a \emph{partition} of $a$ is a way of writing $a$ as a sum of positive integers, 
and \emph{unordered} means that the order of the summands is insignificant. 
The asymptotic behaviour of $\card{p(a)}$ (as $a$ tends to infinity) 
is known~\cite{hardy1918} to be
$$
  e^{\pi \sqrt{{2a}/{3}}}/(4a\sqrt{3})=2^{O(\sqrt{a})}. 
$$
It is possible to enumerate all the partitions of $a$ with constant delay between two consecutive partitions, 
exclusive of the output~\cite[Chapter 9]{nijen78}.

Now the intuition for our reduction of \SCO to \KTR
is that we first guess a partition of $n$ (the number of elements)
that represents how an optimal solution covers the elements as follows ---
associate each element arbitrarily with one of the sets that contain it
(so in effect, we assume each element is covered only once) 
and count how many elements are covered by each set in the optimal solution.
This guessing is done by exhaustive search over $p(n)\le 2^{O(\sqrt{n})}$ 
partitions of $n$. 
Then, we represent the \SCO instance using a \SI instance,
whose pattern tree $T$ succinctly reflects the guessed partition of $n$.
The idea is that the tree is isomorphic to a subgraph of the \SCO graph 
if and only if the \SCO instance has a solution that agrees with our guess.
\begin{proof}[Proof of Lemma~\ref{LemmaKTR}]

Given a \SCO instance on $n$ elements $N=\{n_i: i \in [n]\}$ and $m$ sets $M=\{S_i\}_{i\in [m]}$ and an $\varepsilon>0$, we construct $2^{O(\sqrt{n})}$ instances of \KTR as follows. For a constant $g(\varepsilon)$ to be determined later, the host graph $G_g=(V_g,E_g)$ is the same for all the instances, and is built on the bipartite graph representation of the \SCO instance, with some additions. This is done in a way that a constructed tree will fit in $G_g$ if and only if the \SCO instance has a solution that corresponds to the structure of the tree, as follows (see Figure~\ref{figs:G}). The set of nodes is $V_g=N\cup M\cup M_g\cup R \cup \{r_{g}, r_1, r_2, r\}$, where 
$M_g=\{ X\subseteq M : \card{X}=g \}$ 
 and $R=\{v_j^i:i\in [4], j\in [n/(g/2)]\}$.
Intuitively, the role of $M_g$ is to keep the size of the trees small by representing multiple vertices in $M$ (multiple sets in \SCO) at once as the "powering" technique for \SCO done in~\cite{cygan16}\footnote{note that we can slightly simplify this step in the construction by using as a black box the equivalence from~\cite{cygan16} between solving \SCO in time $O^*(2^{(1-\varepsilon)n})$ and in time $O^*(2^{(1-\varepsilon')(n+t)})$ where $t$ is the solution size. However, we preferred to reduce directly from \SCO for compatibility with our parameters and generality reasons.}, 
and the role of $R$ and $\{r_{g}, r_1, r_2, r\}$ is to enforce that the trees 
we construct will fit only in certain ways.  

The set of edges is constructed as follows. 
Edges between $N$ and $M$ are the usual bipartite graph representation of \SCO 
(i.e., connect vertices $n_j\in N$ and $S_i\in M$ whenever $n_j\in S_i$).
We also connect vertex $X\in M_g$ to vertex $n_j\in N$ 
if at least one of the sets in $X$ contains $n_j$. 
Additionally, we add edges between $r_{g}$ and every vertex in $M_g$, and $v^4_j\in R$ for $j\in [n/(g/2)]$, between $r_i$ and $v^i_j$ for every $i\in \{1,2\}$ and $j\in [n/(g/2)]$, and finally between $r$ and every vertex $v\in \{r_{g}, r_1, r_2\}$, $S_i\in M$, and $v^3_j\in R$ for $j\in [n/(g/2)]$.

\begin{figure}[!ht]
	\centering
		\includegraphics[width=1.0\textwidth,left]{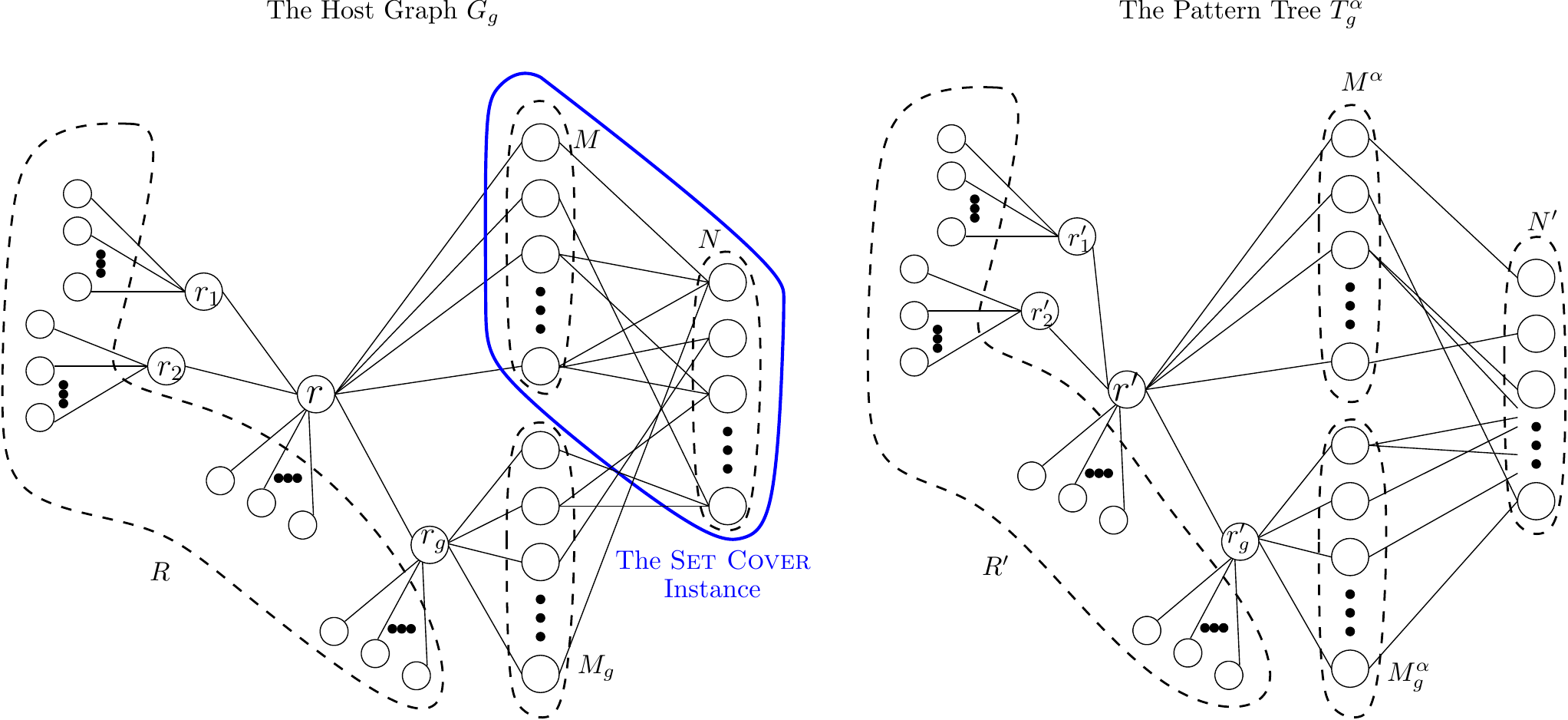}
   \caption[-]{An illustration of part of the reduction. The \SCO instance is depicted in blue, and sets of vertices are indicated by dashed curves.
   }
   \label{figs:G}
\end{figure}

Next, we construct $2^{O(\sqrt{n})}$ trees such that identifying those that are isomorphic to a subgraph of $G_g$ will determine the optimum of the \SCO instance. 

For every partition $\alpha=(p_1,p_2,...,p_l)\in p(n)$ (with possible repetitions) where $p(n)$ is as defined above, we construct a tree $T_g^{\alpha}=(V_g^{\alpha}, E_g^{\alpha})$. This tree has the same set of edges and vertices as $G_g$, except for the vertices in $M\cup M_{g}$ and the edges incident to them, which we replace by a set of new vertices $M^{\alpha}\cup M_g^{\alpha}$, and connect these new vertices to the rest in a way that the resulting graph is a tree. 
In more detail, $V_g^{\alpha}=N'\cup M^{\alpha}\cup M_g^{\alpha}\cup R'\cup \{r'_g, r'_1, r'_2, r'\}$ where $N', R', r'_g, r_1', r_2', r'$ are tagged copies of the originals, and $M^{\alpha}, M_g^{\alpha}$ are initialized to be $\emptyset$.

We define $\alpha_{g}$ to be a partition of $n$ which is also a shrinked representation of $\alpha$ by partitioning $\alpha$ into sums of $g$ numbers for a total of $\lfloor l/g \rfloor$ such sums, and a remaining of less than $g$ numbers. Formally,
$$\alpha_{g}=(\sum^g_{i=1} p_i, \sum^{2g}_{i=g+1} p_i,..., \sum^{g \lfloor l/g \rfloor}_{i=(g-1)\cdot \lfloor l/g \rfloor+1} p_i, p_{g \lfloor l/g \rfloor+1},...,p_l)
$$
Note that all the numbers in $\alpha_{g}$ are a sum of $g$ numbers in $\alpha$, except (maybe) for the last $g':=l-g\lfloor l/g \rfloor<g$ numbers in $\alpha_{g}$, a (multi)set which we denote $s(\alpha_{g})$. For every $i\in \alpha_{g}$ (with possible repetitions) we add a star on $i+1$ vertices to the constructed tree $T^{\alpha}_g$. If $i\in \alpha_{g}\setminus s(\alpha_g)$, we add the center vertex to $M^{\alpha}_g$, connect it to $r'_g$, and add the rest $i$ vertices to $N'$. Else, if $i\in s(\alpha_{g})$ we add the center vertex to $M^{\alpha}$, connect it to $r'$, and again add the rest $i$ vertices to $N'$. We return the minimum cardinality of $\alpha$ for which $(G_g, T_g^{\alpha})$ is a yes-instance. To see that this construction is small enough, note that the size of $G_g$ is at most $4+4\cdot n/(g/2)+m^g+m+n$ which is polynomial in $m$, and the size of the tree $T_g^{\alpha}$ is at most
$$
4+4\cdot n/(g/2)+n/g+g+n = n\cdot (1+9/g)+O(1) = n\cdot (1+\varepsilon)+O(1)
$$
where the last equality holds for $g=9/\varepsilon$, and so the size constraint follows.

We now prove that at least one of the trees $T_g^{\alpha}$ returns yes and satisfies $\card{\alpha}\leq d$, if and only if the \SCO instance has a solution of size at most $d$. For the first direction, assume that the \SCO instance has a solution $I$ with $\card{I}\leq d$. Consider a partition $\alpha_I\in p(n)$ of $n$ that corresponds to $I$ in the following way. Associate every element with exactly one of the sets in $I$ that contains it, and then consider the list of sizes of the sets in $I$ according to this association (eliminating zeroes). Clearly, $(G_g, T_g^{{\alpha_I}})$ is a yes-instance and so we will return a number that is at most $\card{I}$.

For the second direction, assume that every solution to the \SCO instance is of size at least $d+1$. We need to prove that for every tree $T^{\alpha}_g$ with $\card{\alpha}\leq d$, $(G_g,T_g^{\alpha})$ is a no-instance. Assume for the contrary that there exists such $\alpha$ for which $(G_g,T_g^{\alpha})$ is a yes-instance with the isomorphism function $f$ from $T_g^{\alpha}$ to $G_g$. We will show that the only way $f$ is feasible is if $f(r')=r$, $f(M^{\alpha})\subseteq M$, $f(M^{\alpha}_g)\subseteq M_g$, and also $f(N')= N$, which together allows us to extract a corresponding solution for the \SCO instance, leading to a contradiction. 
We start with the vertex $r'\in T_g^{\alpha}$. Since its degree is at least $n/(g/2)+3$ and by Assumption~\ref{asm1} and the construction of $G_g$, it holds that $f(r')\notin \{ r_1, r_2 \}\cup R\cup M\cup M_g $. Moreover, if it was the case that $f(r')\in \{ r_g \}\cup N$ then $\{ f(r'_1), f(r'_2)\}\cap (M\cup M_g)\neq \emptyset$, however, the degree of $r'_1$ and $r'_2$ in $T_g^{\alpha}$ is $n/(g/2)$, and the degree of the vertices in $M\cup M_g$ in $G_g$ is at most $g\cdot n/g^2=n/g$, so it must be that $f(r)=r$. 
Our next claim is that $f(r'_g)=r_g$. Observe that Assumption~\ref{asm2} implies that $M^{\alpha}_g\neq \emptyset$, and so $r'_g$ in the tree has vertices in distance $2$ from it and away from $r'$, a structural constraint that cannot be satisfied by any vertex in $\{ r_1,r_2 \}\cup R$. Furthermore, the degree of $r'_g$ is at least $n/(g/2)$ and so again by Assumption~\ref{asm1} it is also impossible that $f(r'_g)\in M^{\alpha}$, and hence it must be that $f(r'_g)=r_g$. Finally, by the same Assumption and the degrees of $r_1$ and $r_2$, $f(r'_1)$ and $f(r'_2)$ must be in $\{ r_1,r_2 \}$. Altogether, it must be that $f(M^{\alpha}_g)\subseteq M_g$, $f(M^{\alpha})\subseteq M$ and that $f(N')= N$, and therefore we can extract a feasible solution to the \SCO instance that has at most $d$ sets in it, which is a contradiction, concluding the proof of Lemma~\ref{LemmaKTR}. 




\end{proof}

\section{Algorithm for \PPC}\label{ProofsPPC}
In this section we prove Theorem~\ref{Thm:Alg}. To design an algorithm for \PPC, we reduce it to \KTR similarly to Lemma~\ref{LemmaKTR} with some adjustments, and then apply an algorithm for \KTR by~\cite{Koutis16}. First, as it will be enough to use the directed case of \KTR in order to get the desired bound, we simplify Lemma~\ref{LemmaKTR} by dealing with (directed) rooted trees, as follows. $G_g$ now contains only the nodes $N\cup M \cup M_g \cup \{r,r_g\}$ and the edges therein, directed away from the root $r$. Note that the two assumptions that precede Lemma~\ref{LemmaKTR} are not needed here, and that for any chosen $g$ the size of $V(G_g)$ is $O(m^g+m+n)$. We proceed to the description of the adjustments.

Instead of enumerating over all the partitions of $n$, we do it only for $p$ and hence the number of partitions is $2^{O(\sqrt{p})}$ with each partition $\alpha$ inducing a tree $T^{\alpha}_g$ in a similar way to Lemma~\ref{LemmaKTR}, of size at most $2p/g+p$. From here onwards, the proof of correctness is similar to Lemma~\ref{LemmaKTR}, and thus we omit it. By solving each \KTR instance in time $O(2^{2p/g+p}\card{V(G_g)}^{c_2})$ using the algorithm of~\cite{Koutis16}, where $c_2$ is the constant derived from there such that \KTR can be solved in time $O(2^k\card{V(G)}^{c_2})$, and setting $g=4/\varepsilon'$ for $\varepsilon'=\log_{2}(2+\varepsilon)-1$ where $\varepsilon\in(0,1)$ is the chosen parameter, we get a total running time of 
\begin{align*}
O(2^{2p/g+p+c_1\sqrt{p}}\cdot m^{c_2 g}) 
&= O(2^{\varepsilon'/2 \cdot p + p + c_1\sqrt{p}} \cdot m^{c_2 4/\varepsilon'})\\
&\leq O(2^{(1+\varepsilon')p} \cdot m^{c_2 4/\varepsilon'})\\
&\leq O(\cdot (2+\varepsilon)^p \cdot m^{c_2 8/\varepsilon}),
\end{align*}
since $\log(2+\varepsilon)-1\geq \varepsilon/2$ for $\varepsilon\in (0,1)$, and where $c_1$ is the constant implicit in the term $2^{O(\sqrt{p})}$, as required.


{
\ifprocs
\else
\small
\bibliographystyle{alphaurlinit}
\bibliography{robi}
}

\end{document}